\newcommand{\ket}[1]{\left| #1 \right>}
\newtheorem{thm}{Theorem}
\newtheorem{obs}[thm]{Observation}
\newtheorem{lem}[thm]{Lemma}
\newtheorem{dfn}[thm]{Definition}
\newtheorem{cor}[thm]{Corollary}
\begin{document}


\title{3-d topological quantum memory with a power-law energy barrier}	

\author{Kamil P. Michnicki}
\affiliation{Department of Physics, University of Washignton, Seattle, WA USA.}

\begin{abstract}
 We discuss energy barriers and their relationship to self-correcting quantum memories.  We introduce the solid code, a 3-d version of Kitaev's surface code, and then combine several solid codes using a technique called welding. The resulting code is a $[[O(L^3),1,O(L^{\frac{4}{3}})]]$ stabilizer code with an energy barrier of $O(L^{\frac{2}{3}})$, which is an exponential improvement over the previous highest energy barrier in 3-d. No-go results are avoided by breaking microscopic translation invariance.  
\end{abstract}

\maketitle 

\section{Introduction}

An important problem in the quantum computing community is whether it is possible to make a quantum version of the ferromagnetic hard disc drive. Such a medium could be used to protect a quantum state from decoherence without the need to actively detect and correct errors.   The 4-d toric code Hamiltonian, \cite{dennis2002topological, alicki2008thermal}, a spin system, is a theoretical example of such a self-correcting quantum memory.  It uses a macroscopic energy barrier to prevent noise from accumulating and corrupting stored quantum information. It is a major open question whether such a system can exist in less than four dimensions. The problem is intimately related to the problem in condensed matter physics of whether topological order can exist at non-zero temperatures \cite{nussinov2008autocorrelations}. Most results for self-correcting quantum memories in 2-d and 3-d to date have been negative \cite{bravyi2009no,yoshida2011feasibility,landon2013local} or use operators of unbounded strength \cite{hamma2009toric}.   One exception has been the cubic code \cite{haah2011local}, but that too has an energy barrier of only $\log(L)$.  In this letter, we improve the best known energy barrier for spin Hamiltonians with topological order from $O(\log L)$ to $O(L^{2/3})$. Both the Haah code and the code proposed in this letter can be shown \cite{bravyi2011analytic} to give a theoretical increase in storage time when the system size is increased up to a temperature dependent maximum. It is open whether there truly is a maximum system size for these codes past which the storage time decreases but we conjecture that there is. The goal is to have a storage time scaling exponentially with the size, i.e. number of spins, of the system. The result presented in this letter can be viewed as a stepping stone towards this goal.

To gain some intuition, consider the ferromagnetic hard disc drive. It uses the net magnetization of a ferromagnet to store bits of information. At room temperature the net magnetization is stable against a global change in polarization because of a large energy barrier separating states of opposite polarization. If the magnetization of a small domain flips, due to noise, there will be an energy penalty proportional to the perimeter of the domain. At sufficiently low temperatures, this tension tends to shrink it; a phenomenon which leads to a stable classical memory that is self correcting. 

The requirements for quantum memories are more stringent than those for classical memories.  Classical memories need only have a stable order parameter while quantum memories must simultaneously hide the order parameter. Luckily, there are many local error correcting codes that achieve this such as the 2-d and 3-d toric code and color codes \cite{kitaev1997proceedings,castelnovo2007entanglement,bombin2006topological}.  The ground state observables of these depend on the topology of the system, not on any local observables. In this way they hide and therefore protect the stored superposition. Although these systems are resilient against a certain rate of local noise, the noise can still build-up past the point where it can affect observables on the stored quantum state. The 4-d toric code Hamiltonian fights this build up of errors by ensuring that any sequence of local operations that can change such observables must have a macroscopically large energy at some point in the sequence. Thermalization to lower energies prevents such build up of errors. The hope is that if we can find a large energy barrier in 3-d for a local spin system with topological order then the system might be a good quantum memory. 

The toric codes, color codes, cubic code as well as the code presented in this letter are all examples of stabilizer codes. We review stabilizer codes now. The Pauli group is defined by
\begin{equation}
G=\{(i)^k P_1\otimes ... \otimes P_n: k\in\{0,1,2,3 \}, P_i\in \{I,X,Y,Z \}\}.
\end{equation}
where $X$, $Y$ and $Z$ are single qubit Pauli operators. A stabilizer group $S$ is a subgroup of the Pauli group such that $-I \notin S$. This implies that there exists a subspace $\mathcal{H}_c$, such that for all $\ket{\psi}\in \mathcal{H}_c$ and for all $h\in S$, $h\ket{\psi}=\ket{\psi}$. This subspace is what we call the code space. The protocol to do error correction with a stabilizer code is to measure operators from the stabilizer group and to perform error correcting operations based on the value of the measurements. For an ideal self-correcting quantum memory, the environment does the error correction by thermalizing to lower energy states. 

Logical operators are operators that map one codeword to another and keep correctable states correctable.  For simplicity we will choose logical operators that are in the Pauli group and which commute with the stabilizer group. The code presented in this letter is of a special type \cite{calderbank1996good}. It has a generating set where each generator is either a tensor product of exclusively $X$ operators or exclusively $Z$ operators. We will be considering logical operators of the same type. 

The energy barrier of a code is defined with respect to a Hamiltonian. The Hamiltonian is defined in terms of a generating set $R$ such that the group generated by multiplication of elements of $R$ generates $S$. That is, $\langle R \rangle = S$. The Hamiltonian is $H=-\sum_{h \in R} h$.  The ground state subspace is exactly the code space of the stabilizer group $S$. The energy barrier is defined with respect to a local sequence of errors that maps one ground state to another, i.e. that enacts a logical operator. The minimum peak energy over all such sequences is what we call the energy barrier.

In this letter, we will be discussing the energy barrier exclusively. But the motivation to study the energy barrier comes from considering the lifetime of a memory subject to thermal noise. Thermal noise is typically modeled \cite{bravyi2011analytic, chesi2010thermodynamic, alicki2008thermal, davies1976quantum} by a set of local jump operations, i.e. the errors, which occur with a rate proportional to $\exp(-\beta \Delta)$. $\Delta$ is the change in energy upon the application of the error. Thus the larger the energy barrier the more thermal noise is suppressed in this model. 

\textbf{Main result}--There exists a local stabilizer Hamiltonian with an energy barrier of $O(L^{\frac{2}{3}})$ where the Hamiltonian is composed of $O(L^3)$ qubits and the qubits are of finite density. By finite density we mean that a finite number of qubits fit into a finite volume reference box. By local we mean that the terms in the Hamiltonian act on a set of qubits contained in another finite reference box.

Haah \cite{haah2013commuting} proved that for local translation-invariant stabilizer codes, the highest energy barrier is $O(\log L)$. This improves a no-go theorem by Yoshida \cite{yoshida2011feasibility}. So how exactly can the result in this letter hold? The code in this letter is constructed from macroscopic blocks. Each such block satisfies the result by Haah. We describe these blocks in section IIA. These macroscopic blocks are joined together, welded, into a macroscopic lattice. This welding process is described in section IIB. The code in this letter is translation invariant over a length that grows with the system size. This avoids the no-go results by Yoshida \cite{yoshida2011feasibility} and Haah \cite{haah2013commuting}.

\section{Solid Codes and Welding}
\subsection{Solid Codes}

In this section we introduce the solid code which is a local stabilizer code on a three dimensional lattice of qubits. We use the word qubit instead of spin to emphasize that the subsystems of the Hamiltonian could be any two level system.  The solid code is a stabilizer code. It is the 3-d analog of a surface code \cite{dennis2002topological}, i.e. a 3-d toric code \cite{castelnovo2007entanglement} with rough and smooth boundaries. We discuss its logical operators and show that one of them has a constant energy barrier which means it cannot store quantum information though it is an important building block of the welded codes.

We define the generators of the solid code with respect to a graph, shown in figure \ref{solid} (qubits are labeled by edges). The graph is a cubic lattice, i.e. a cube composed of $d \times d \times d$ cubic primitive cells but with the horizontal edges removed for primitive cells at the top and bottom boundaries. These qubits are not included in our code. Terms in the Hamiltonian are labeled by vertices and faces of this graph, where by faces we mean faces of the primitive cells. These faces will be referred to as plaquettes. The vertices are:
 \begin{equation}
V=\{v=(v_1,v_2,v_3):v_i \in \{1,...,N\}\}.
\end{equation}
Using the unit vectors $n_1=(1,0,0)$, $n_2=(0,1,0)$ and $n_3=(0,0,1)$, the edges are:
\begin{equation}
\begin{split}
E = \{ \{v,v+n_3\}:v\in V, v_3\neq N \} \\ 
\cup \{ \{v,v+n_2\}:v\in V, v_3\neq 1, v_3\neq N \} \\
\cup \{ \{v,v+n_1\}:v\in V,v_3\neq1,v_3\neq N \}.
\end{split}
\end{equation}
Let $\Gamma(v)$ be the set of edges that neighbor a vertex $v$. For each $v$ with $|\Gamma(v)|>1$ define the term $h_v^X=\prod_{e\in \Gamma(v)} X_e $. Let $\partial f$ be the set of edges on the boundary of a plaquette $f$.  For each plaquette $f$ define a term $h_f^Z=\prod_{e\in \partial f} Z_e $.  The plaquettes along the top and bottom rough boundaries are missing one edge. Those qubits aren't included in our code. $|\partial f|=3$ there. Finally the Hamiltonian is given by a sum over the elements of the set of vertices $V$ and the set of faces $F$:
\begin{equation}
H=\left( \sum_{v \in V:|\Gamma(v)|>1} -h_v^X \right) +\left( \sum_{f \in F} -h_f^Z \right).
\end{equation}
The $X$ and $Z$-type terms overlap on an even number of qubits and so commute. Hence the terms generate a stabilizer group and the ground state subspace of the Hamiltonian is exactly the code space of this stabilizer group.

The set of logical operators completely determines the ground state degeneracy. We will show that there are two distinct non-trivial logical operator. We will denote them by $\bar{X}$ and $\bar{Z}$.  Because they anti-commute, we can only diagonalize one of them at a time. Hence the ground state degeneracy is 2. 

The logical operator $\bar{X}$ of the solid code resembles an open membrane. If $\ket{\psi}$ is in the ground subspace of $H$ then each vertex operator satisfies $h^X_v \ket{\psi}=\ket{\psi}$. Similarly a product of vertex operators has a $+1$ eigenvalue. Multiplying vertex operators generates closed membranes of qubits and horizontal pairs of open membranes. These open membranes can be made to be far apart so that each membrane overlaps with a disjoint set of plaquettes. Hence each membrane commutes with each term in the Hamiltonian, yet is not generated by them. We conclude that the logical operator $\bar{X}$ is a tensor product of single qubit $X$  operators on a single horizontal membrane. 

The logical operator $\bar{Z}$ resembles an open string. If $\ket{\psi}$ is in the ground subspace then a plaquette operator satisfies $h^Z_f\ket{\psi}=\ket{\psi}$. Similarly a product of plaquette operators has a $+1$ eigenvalue. Multiplying plaquette operators generates closed strings, strings starting and ending on the same rough boundary and pairs of strings extending between opposite rough boundaries. These pairs of open strings can be made to be far apart so that each string overlaps with a disjoint set of vertex operators. Hence each open string commutes with each term of the Hamiltonian, yet is not generated by them. We conclude that the logical operator $\bar{Z}$ is a tensor product of single qubit $Z$ operators on a single string extending between opposite rough boundaries. See figure \ref{solid}.

The logical operator $\bar{Z}$ for the solid code has a constant energy barrier. Understanding why is key to doing better. Applying a $Z$ operator on a single qubit violates either one or two terms. We call these violated terms defects. By flipping an adjacent qubit, we satisfy that term while violating at most one other term. In this way we can move a defect in the bulk of the solid or annihilate it at either of the rough boundaries. By flipping a qubit on a rough boundary, we create a single defect which we can then move to and annihilate on the opposite boundary. This sequence has no more than a constant energy penalty. 

The sequence of errors resulting in the logical operator $\bar{Z}$ resembles a growing string.  One way to create a large energy barrier is to force this string to split many times. To do this we need qubits such that errors on them create three or more defects.  We will be combining blocks of solid code in the next section to achieve this.

\subsection{Welded Solid Codes}

In this section we achieve a power-law energy barrier by combining several solid codes into a 3-d lattice. It is interesting to note that the final lattice is not a regular lattice. This is because each block is bent and stretched to match-up and connect with each other into a macroscopic lattice. The procedure for combining blocks of code is called welding. We will weld three solid codes together,  analyze the shape of the logical operators and show that the energy barrier has increased. 

To gain some intuition, consider a 1-d Ising model of a finite length. Here the Hamiltonian is $\sum_{i=0}^{n-1} -Z_i Z_{i+1}$. The ground state has two degenerate eigenstates $\ket{00...0}$ and $\ket{11...1}$. Suppose we want to flip all of the qubits in a sequence that minimizes the number of defects. The best we can do is to flip the first, second, third etc... qubits in a line until all of the qubits have been flipped. This sequence creates a single defect and moves it from one end of the string to the other. Welding is like combining three such strings on the last qubit, i.e. so each string shares the $n$th qubit with each other. When we try to move a defect past this shared qubit, it will split into two. Thus the energy increases. For the welded solid code there is 2-d boundary between 3-d blocks of qubits. Defects split when moving past these boundaries. But now there is a choice about which part of the boundary we move the defect through since it is an area not a point. 

We now describe this boundary where defects split by combining three solid codes along their rough boundaries. For each solid we identify qubits on the bottom rough boundaries with each other. More precisely, for all $i,j\in \{1,...,N\}$ the qubit labeled by $(\{i,j,0\},\{i,j,1\})$ in the first solid code is the same as the corresponding qubit in the second and third solid codes.  Because the $X$ and $Z$-type terms no longer commute, we update all local $Z$-type stabilizers to commute with the $X$-type stabilizers. The result is that whenever $Z$-type stabilizers agree on the shared qubits of the three solids, they are combined into a single operator. Combining operators is what we call welding. We can formalize welding as follows. For a $Z$-type operator $h$, define $Q(h)$ to be the qubits that $h$ acts on nontrivially. A set of $Z$-type operators $\{h_1,...,h_n\}$ is said to be {\em welded } together into a $Z$-type operator $h$ when $Q(h)=\cup_i Q(h_i)$. If we treat the identity operator $I$ as a $Z$-type operator, then every $Z$-type stabilizer of the new code is a welded version of $Z$-type stabilizers from the three solid codes. Thus we call the resulting code a welded code.  A more thorough account of the theory of welding can be found in the supplementary material and \cite{michnicki20123}.

Next, we show that the new code, the three welded solids, encodes only a single qubit by showing that all nontrivial $Z$-type logical operators are equivalent. The $Z$-type logical operators from each solid get welded together so that the new logical operator $\bar{Z}$ resembles three strings emanating from a single qubit on the shared boundary. See figure \ref{weldedsolidcode}.  After welding, the $Z$-type stabilizers have the following shapes: half-loops on rough boundaries, loops in the bulk of each solid, three welded half-loops on the shared rough boundary and pairs of logical Z operators.  There can be no other $Z$-type logical operator. This is because if a $Z$-type operator acts with an even number of $Z$ operators on the shared qubits, then it is in the stabilizer group and if it acts with an odd number of $Z$ operators on the boundary, then it is equivalent the operator composed of three welded strings. Since there is only one nontrivial $Z$-type logical operator, it follows that the ground state degeneracy is 2, corresponding to a single logical qubit. 

A defect caused by $Z$ errors would split into two moving through this shared boundary, an increase in energy. This is because a single-qubit $Z$ operator applied to this shared boundary creates three violated terms, one for each solid block.  Thus the energy barrier for the logical operator $\bar{Z}$ has increased from 1 to 2.

\begin{figure}[h!]
  \includegraphics[width=0.5\textwidth]{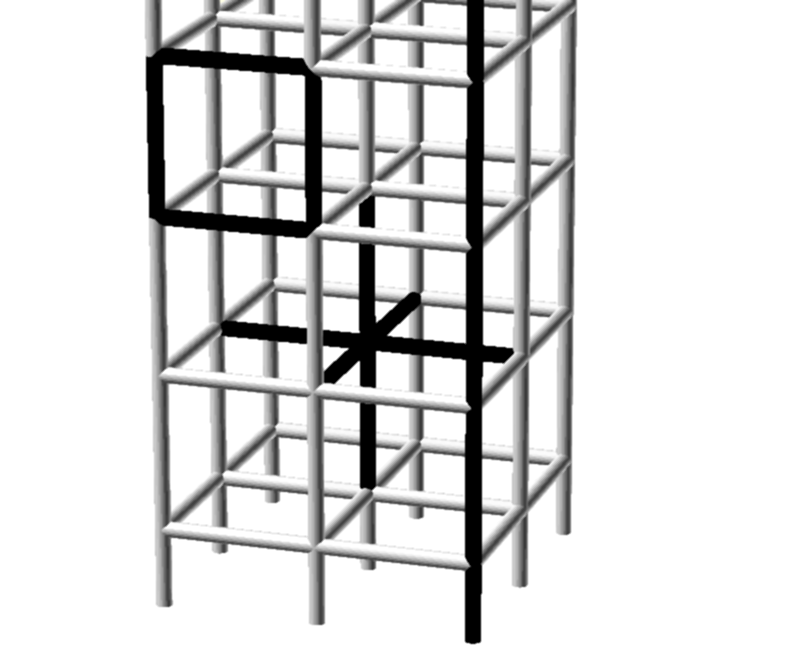}
  \caption{A solid code with qubits represented by edges. The following operators shaded darker: a plaquette operator, a star operator and a logical Z operator.}
  \label{solid}
\end{figure}

\begin{figure}[h!]
  \includegraphics[width=0.5\textwidth]{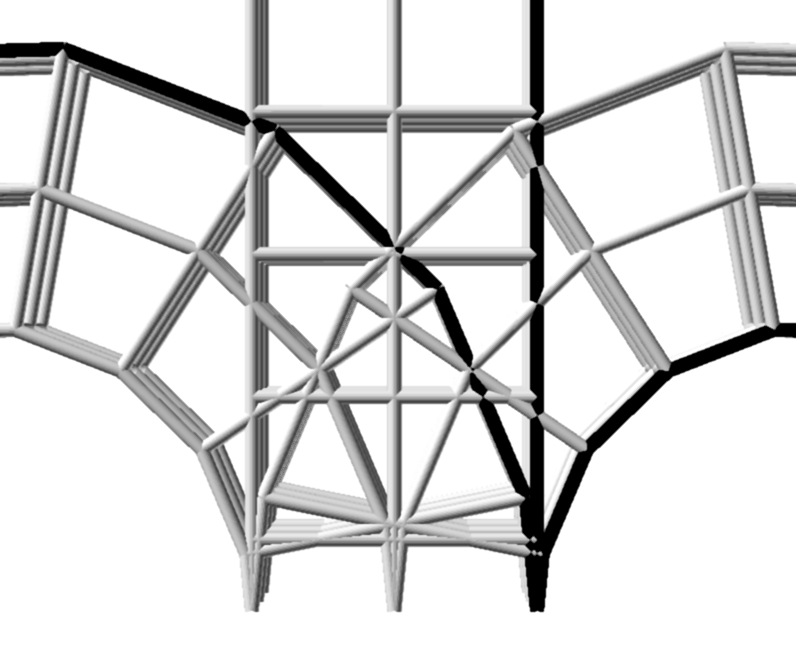}
  \caption{Three solid codes welded together with qubits represented by edges. The bifurcating $\bar{Z}$ operator is shaded darker.}
  \label{weldedsolidcode}
\end{figure} 

In order to increase the energy barrier to a power law, we generalize welding three solids on a single boundary to welding many solids into a lattice. We label the solids by their rough boundaries, where a rough boundary is denoted by $W_i$. In the previous example we had the set of solids $E=\{\{W_1,W_4\},\{W_2,W_4\},\{W_3,W_4\} \}$ and they all share the rough boundary $W_4$, where defects split. In this notation, the welded rough boundaries act as``fat'' vertices and the bulk of the solid codes act as ``fat'' edges of a graph $G=(V=\{W_i\},E)$.  We weld the solid codes into the graph of a 3-d cubic lattice.  As we will see, this gives a particularly high energy barrier.

We deduce the energy barrier of the logical operator $\bar{Z}$ of the welded cubic lattice by looking at the rough boundaries of each block.  If there is an odd number of $Z$ errors on the rough boundaries of a particular solid code, then there must be at least one defect in the bulk of that solid. Hence, if we weld the solid codes into a graph $G$, then the energy barrier of the welded string operator will be at least as big as the energy barrier for an Ising model Hamiltonian $H=\sum_{\{i,j\}\in E} -Z_i Z_j$ with precisely the same graph $G$. If $G$ is a 3-d cubic lattice of width $R$, then the energy barrier for the logical operator $\bar{Z}$ will be $O(R^2)$. In fact, this bound can be saturated provided we never create more than one defect within the bulk of any solid.

We can deduce the energy barrier of the logical operator $\bar{X}$ in a similar way as for the logical operator $\bar{Z}$. The logical operator $\bar{X}$ is an $X$-type operator and hence does not get welded. It remains a membrane. We show a lower bound for the energy barrier by considering the contribution to the energy barrier from the vertical plaquettes only, i.e. plaquettes in the $y$-$z$ and $z$-$x$ planes of each solid, leaving out the plaquettes in the $x$-$y$ plane. Notice that for each vertical surface of plaquettes, defects can move up and down without creating new vertical defects but moving between these vertical regions creates a vertical defect in each neighboring region, synonymously to the case of the solid regions. These ``flat'' regions are connected to each other in a 2-d square lattice of width $O(d)$, provided each solid is $O(d)$ qubits wide. Again the energy barrier of this horizontal membrane will be given by the energy barrier of the Ising model on a 2-d square lattice.  So the energy barrier is lower bounded by $O(d)$. This bound can be saturated, even if we include horizontal plaquettes, provided that the membrane is grown completely horizontally and in a single domain. 

Finally, the energy barrier of solid codes welded in a cubic lattice is the minimum of the two energy barriers: $O(d)$ and $O(R^2)$. The total number of qubits will be $O(d^3)$ qubits per solid with $O(R^3)$ solids, which leads to the number of qubits $N\sim O(d^3 R^3)$. The maximum energy barrier for a fixed number of qubits $N$ is the minimum of the $X$ and $Z$ energy barriers. Thus the maximum energy barrier happens when $O(d)\sim O(R^2)$, leading to an energy barrier of $\delta E\sim O(N^{2/9})$. The qubits can be placed in a box of side lengths of $O(L)$ so that the energy barrier is $O(L^{2/3})$. This demonstrates our main result.

\section{Discussion}

We have constructed a code that has an exponentially higher energy barrier than the the logarithmic bound presented by Haah \cite{haah2013commuting}. We achieved this by tuning the length over which the code is periodic to a macroscopic distance. 

The time that it takes for a memory to be corrupted depends not only on the energy barrier but also on the number of error sequences that lead to a logical operator.  \cite{bravyi2011analytic} derived a lower bound on the storage time $t$ for any stabilizer code Hamiltonian with energy barrier $\delta E$ of $t \sim \frac{e^{\beta \delta E}}{N}2^{-k(L)}$ when $N=O(L^3)\lesssim e^{\beta}$ using the error model of a Hamiltonian in the weak coupling limit. This leads to a lower bound of $t \sim e^{\beta e^{\frac{2}{9}\beta}}$ for $N \lesssim e^{\beta}$ for the welded solid code. An upper bound is not known. 

Further progress might be made by considering non-periodic codes or local dynamic models such as cellular automata decoders.

{\bf Acknowledgements.}
Thank you Aram Harrow and Steve Flammia for stimulating discussion and helpful suggestions throughout the writing process, and to Jeongwan Haah for feedback on the final manuscript. This work was supported by NSF grant 0829937, DARPA QuEST contract FA9550-09-1-0044 and IARPA via DoI NBC contract D11PC20167. Part of this work was carried out while visiting at MIT.

\bibliography{refs}

\section{Appendix:the Theory of Welding}

 Welding is a technique for combining two stabilizer groups to produce a third. It can be used to combine the shape of their logical operators while keeping the generating set local. 

The motivation for the technique of welding comes from the problem of combining two CSS stabilizer groups $S_1$ and $S_2$. Since they are of the CSS type, we have the identities $S_1 = \left<S_1^X \cup S_1^Z \right>$ and $S_2 = \left<S_2^X \cup S_2^Z \right>$ where $S_i^X$ (resp. $ S_i^Z$) refers to the subgroup of $X$-type (resp. $Z$-type) operators of the stabilizer group $S_i$. The stabilizer group $S_1$ is defined on qubits $Q_1$ and the stabilizer group $S_2$ is defined on qubits $Q_2$. When the condition $Q_1 \cap Q_2 \neq \emptyset$ is satisfied then the group $\left< S_1 \cup S_2 \right>$ is not necessarily Abelian and thus not necessarily a stabilizer group. This is because there may be anticommuting pairs between the sets $S_1^X \cup S_1^Z$ and $S_2^X \cup S_2^Z$. One way to get around this problem is to choose the $X$-type subgroup to be $\left< S_1^X \cup S_2^X \right>$ and then update the subgroups $S_1^Z$ and $S_2^Z$ to commute with these operators.  This new code is what we call a welded code. 

When a $Z$-type operator $h$ commutes with each element of the set $S_1^X \cup S_2^X$, it takes on a special form. When restricted to qubits $Q_1$(or $Q_2$), $h$ is a $Z$-type stabilizer or nontrivial logical operator of the stabilizer group $S_1$(or $S_2$). That is, $h$ contains either $Z$-type stabilizers or logical operators of the stabilizer groups $S_1$ and $S_2$ as substrings.

Generating sets can be kept local after welding provided that they satisfy certain conditions. These conditions are called \textit{well matched} and \textit{independent on the weld}.  Furthermore, the nontrivial logical operators of the stabilizer group $S_3$ always look like nontrivial logical operators of the stabilizer groups $S_1$ and $S_2$ when restricted to qubits $Q_1$ or $Q_2$. 

Next we make a series of definitions and observations that make these introductory remarks more clear. Throughout these definitions $i\in \{ 1,2,3 \}$ where we are welding code 1 and code 2 into code 3. 

\begin{dfn}
The group $S_i$ denotes a stabilizer group of the CSS type, i.e. $S_i=\left< S_i^X \cup S_i^Z \right>$ for $S_i^X$ and $S_i^Z$ defined below. 
\end{dfn}

\begin{dfn}
The group $S_i^X$ contains all $X$-type stabilizers of the group $S_i$, i.e. $S_i^X=S_i\cap \left< X_1, X_2,...,X_n \right>$.
\end{dfn}

\begin{dfn}
The group $S_i^Z$ contains all $Z$-type stabilizers of the group $S_i$, i.e. $S_i^Z=S_i\cap \left< Z_1, Z_2,...,Z_n \right>$.
\end{dfn}

Here, $X_k$ (or $Z_k$) denote a bit flip (or phase flip) on qubit $k$. Without loss of generality, we will assume that $X$ and $Z$-type stabilizers and logical operators have $+1$ coefficients. Generally, an $X$(or $Z$)-type stabilizer can have a $+1$ or $-1$ coefficient. As an example, $-X_1 X_2 X_3$ is a valid stabilizer with a $-1$ coefficient.  However, the logical operators do not depend on the choice of the eigenvalues of the stabilizers.

\begin{dfn}
The set $H_i^X$ denotes a generating set of the group $S_i^X$, i.e. $\left< H_i^X \right>=S_i^X$.
\end{dfn}

\begin{dfn}
The set $H_i^Z$ denotes a generating set of the group $S_i^Z$, i.e. $\left< H_i^Z \right>=S_i^Z$.
\end{dfn}

\begin{dfn}
The group $N_i$ denotes the normalizer of the group $S_i$, i.e. $h\in N_i$ if and only if $h\in \left<X_1,...,X_n,Z_1,...,Z_n \right>$ and for all $g\in S_i$, $[h,g]=0$.
\end{dfn}


\begin{dfn}
The group $N_i^Z$ contains all $Z$-type operators that commute with every element of $S_i$, i.e. $N_i^Z=N_i\cap\left< Z_1,...,Z_N \right>$.
\end{dfn}


\begin{dfn}
The set $L_i^Z $ is a minimal generating set for all $Z$-type logical operators, i.e. it includes all $Z$-type logical operators means that $\left< L_i^Z \cup S_i^Z \right> = N_i^Z$ and it is minimal means that  $\left< L_i^Z \right> \cap S_i^Z=\{I \}$.
\end{dfn}

The sets $L_i^Z$, $H_i^X$ and $H_i^Z$ are assumed to be uniquely defined in the observations and theorems of this appendix. But any choice satisfying  the conditions of the observations and theorems will do. 

\begin{dfn}
The set $Q(S)$ denotes qubits that a stabilizer code $S$ acts on non-trivially, i.e. qubit $q\in Q(S)$ iff there exists $h\in S$ such that $tr_q h=0$. To simplify notation we denote $Q(S_i)$ as $Q_i$. 
\end{dfn}

\begin{dfn}
The operator $\theta_{Q}(O)$ denotes the restriction of a product operator $O$ to qubits $Q$.  More specifically, if $O=\otimes_i P_i$ then $\theta_Q (O)=\otimes_{i\in Q} P_i \otimes_{j\notin Q} I_j$.  When applied to a set of product operators $S$, $Q(S)=\{\theta_Q(h): h \in S \}$. We simplify the notation by defining $\theta_{Q_1} \equiv \theta_1$, $\theta_{Q_2} \equiv \theta_2$ and $\theta_{Q_1\cap Q_2} \equiv \theta_{12}$.
\end{dfn}

With these definitions we formalize our observation that $Z$-type operators that commute with $S_1^X \cup S_2^X$ contain the former stabilizers or logical operators as substrings.
\begin{obs}
If $h\in N_3^Z$ then $\theta_1(h)\in N_1^Z$ and $\theta_2(h)\in N_2^Z$.
\end{obs}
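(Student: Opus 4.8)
The plan is to show that the restriction $\theta_1(h)$ of a $Z$-type operator $h \in N_3^Z$ is a $Z$-type operator supported on $Q_1$ that commutes with every element of $S_1$, which is exactly the definition of $N_1^Z$. Since $h$ is $Z$-type, so is $\theta_1(h)$, and it is automatically supported on $Q_1$ by construction of $\theta_1$; so the whole content is the commutation claim. First I would recall how $S_3$ is built from $S_1$ and $S_2$: the $X$-type subgroup of $S_3$ is $\langle S_1^X \cup S_2^X \rangle$ (up to the choice of generators $H_1^X, H_2^X$), so in particular every element of $S_1^X$ is, when viewed on the full qubit set $Q_1 \cup Q_2$, an element of $S_3^X \subseteq S_3$. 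Then $h \in N_3^Z$ commutes with each such element.

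The key step is a locality/support argument for commutation of Pauli operators. For two Pauli operators, commutation versus anticommutation is determined by the parity of the number of qubits on which they have anticommuting single-qubit factors. Let $g \in S_1^X$. Its support $Q(g)$ lies entirely inside $Q_1$. Therefore the single-qubit factors of $h$ that can fail to commute with those of $g$ all sit on qubits in $Q_1$, and on those qubits $h$ and $\theta_1(h)$ have identical factors. Hence the overlap parity of $h$ with $g$ equals the overlap parity of $\theta_1(h)$ with $g$, so $[\theta_1(h), g] = [h, g] = 0$. This shows $\theta_1(h)$ commutes with every element of a generating set $H_1^X$ of $S_1^X$, hence with all of $S_1^X$.

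It remains to handle $S_1^Z$: but $\theta_1(h)$ is $Z$-type and any two $Z$-type Pauli operators commute, so $\theta_1(h)$ trivially commutes with every element of $S_1^Z$. Since $S_1 = \langle S_1^X \cup S_1^Z \rangle$, we conclude $\theta_1(h)$ commutes with all of $S_1$, i.e. $\theta_1(h) \in N_1^Z$. The argument for $\theta_2(h) \in N_2^Z$ is identical with the roles of $1$ and $2$ exchanged. The main thing to be careful about — rather than an obstacle — is the bookkeeping that the welding construction really does place each original $X$-type stabilizer of $S_1$ inside $S_3$ (so that $h \in N_3^Z$ is guaranteed to commute with it); this is exactly the design of the welded $X$-subgroup $\langle S_1^X \cup S_2^X \rangle$, so no further work is needed there.
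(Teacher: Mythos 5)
Your proposal is correct and follows essentially the same route as the paper: since the $X$-type stabilizers of $S_1$ are supported entirely on $Q_1$, commutation of $h$ with them depends only on its restriction $\theta_1(h)$, and commutation with $S_1^Z$ is automatic for a $Z$-type operator. The paper's proof is just a terser version of the same argument, leaving the support/parity bookkeeping implicit.
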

\begin{proof}
By definition $h$ commutes with each element of $H_1^X$. The set $H_1^X$ has support only on qubits $Q_1$. Hence $\theta_1(h)\in N_1^Z$. A similar argument shows that $\theta_2(h)\in N_2^Z$.
\end{proof}

What this observation says is that if a $Z$-type operator commutes with all of the $X$-type operators, then it looks like a $Z$-type stabilizer or logical operator of each code when restricted to qubits of that code. With this motivation we define welding between two stabilizer codes of the CSS type. 
\begin{dfn}
Define 
\begin{equation}
H_3^X \equiv H_1^X \cup H_2^X
\end{equation}
\begin{equation}
S_3^Z \equiv \{ h \in \left< Z_1,...,Z_n \right>: \theta_1(h)\in S_1^Z, \theta_2(h)\in S_2^Z \}.
\end{equation}
 $S_3=\left< H_3^X \cup S_3^Z \right>$ is said to be a welded stabilizer group of the groups $S_1$ and $S_2$. They are welded on the qubits $Q_1\cap Q_2$. 
\end{dfn}

Not only do $Z$-type stabilizers get welded together, but also $Z$-type logical operators get welded together. 

\begin{obs}
If $l$ is a non-trivial $Z$-type logical operator, i.e. $l \in N_3^Z \backslash S_3^Z$, then either $\theta_1(l)\in N_1^Z\backslash S_1^Z$ or $\theta_2(l)\in N_2^Z \backslash S_2^Z$ or both.
\end{obs}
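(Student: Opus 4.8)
The plan is to argue by contrapositive. Suppose $l \in N_3^Z$ but that \emph{neither} $\theta_1(l) \in N_1^Z \setminus S_1^Z$ \emph{nor} $\theta_2(l) \in N_2^Z \setminus S_2^Z$; I want to conclude $l \in S_3^Z$, i.e. that $l$ is a trivial logical operator. By Observation~3 (the preceding observation, stating $h \in N_3^Z \Rightarrow \theta_1(h) \in N_1^Z$ and $\theta_2(h) \in N_2^Z$), we already know $\theta_1(l) \in N_1^Z$ and $\theta_2(l) \in N_2^Z$. Combined with the negated hypothesis, the only remaining possibility for each restriction is $\theta_1(l) \in S_1^Z$ and $\theta_2(l) \in S_2^Z$. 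But this is \emph{exactly} the defining condition for membership in $S_3^Z$ as given in the welding definition: $S_3^Z = \{ h \in \langle Z_1,\dots,Z_n\rangle : \theta_1(h) \in S_1^Z,\ \theta_2(h) \in S_2^Z \}$. Hence $l \in S_3^Z$, contradicting $l \in N_3^Z \setminus S_3^Z$. Reading the chain of implications forward gives the statement.

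First I would invoke Observation~3 to pin down that both restrictions land in the respective normalizer subgroups $N_i^Z$; this is where the CSS structure and the fact that $H_i^X$ is supported on $Q_i$ do their work, but it is already packaged for us. Second, I would set up the dichotomy: within $N_i^Z$, an operator is either a (possibly trivial) stabilizer, i.e. in $S_i^Z$, or a nontrivial logical operator, i.e. in $N_i^Z \setminus S_i^Z$ — these two cases are exhaustive and mutually exclusive since $S_i^Z \subseteq N_i^Z$. Third, I would feed the "both restrictions are stabilizers" case directly into the definition of $S_3^Z$ to get triviality of $l$. The logical skeleton is then just: not-logical-on-$Q_1$ and not-logical-on-$Q_2$ $\Rightarrow$ stabilizer-on-$Q_1$ and stabilizer-on-$Q_2$ $\Rightarrow$ in $S_3^Z$ $\Rightarrow$ not a nontrivial logical operator of $S_3$.

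There is essentially no computational obstacle here; the argument is a short syllogism once Observation~3 is in hand. The one place that warrants a sentence of care is making sure the restriction map $\theta_i$ genuinely sends $l$ into $N_i^Z$ and not merely into $\langle Z_1,\dots,Z_n\rangle$ — but that is precisely the content of the cited observation, whose proof (commutation with $H_i^X$, support confined to $Q_i$) transfers verbatim. A second subtlety worth flagging is that the claim is an inclusive "or both," which is automatic from the contrapositive formulation: we only ruled out the single case where \emph{both} restrictions are trivial, so any of the three remaining combinations (logical on $Q_1$ only, logical on $Q_2$ only, logical on both) is permitted. I expect the "main obstacle," such as it is, to be purely expository: stating cleanly that $N_i^Z = S_i^Z \sqcup (N_i^Z \setminus S_i^Z)$ and that this is the entirety of the case analysis, so that the reader sees the definition of $S_3^Z$ is being matched exactly.
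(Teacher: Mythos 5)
Your proposal is correct and follows essentially the same route as the paper: the paper also notes that $\theta_1(l)\in N_1^Z$ and $\theta_2(l)\in N_2^Z$ (re-deriving this from commutation with $S_1^X$ and $S_2^X$ rather than explicitly citing the previous observation), then observes that if both restrictions were stabilizers, the definition of $S_3^Z$ would force $l\in S_3^Z$, a contradiction. The only difference is cosmetic phrasing (contradiction versus contrapositive).
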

\begin{proof}
If $l\in N_3^Z\backslash S_3^Z$ then $l$ commutes with each element of the groups $S_1^X$ and $S_2^X$. Hence $\theta_1(l)\in N_1^Z$ and $\theta_2(l)\in N_2^Z$. If both $\theta_1(l) \in S_1^Z$ and $\theta_2(l) \in S_2^Z$ then $l\in S_3^Z$ which contradicts our assumption that it is not. Hence $\theta_1^Z(l)\in N_1^Z\backslash S_1^Z$ or $\theta_2(l) \in N_2^Z \backslash S_2^Z$ or both.
\end{proof}

Next, we define a special set of conditions under which nontrivial logical operators of the welded code exist and where the welded stabilizer group $S_3$ is local provided that the stabilizers groups $S_1$ and $S_2$ are local. Generally the welded code need not encode any qubits. That is, all logical operators might be proportional to the identity $I$. Also it is not guaranteed that two local codes necessarily weld into another local code. We now define conditions under which these problems go away.

\begin{dfn}
Two generating sets $H_1$ and $H_2$ are {\em well matched} when for all $h_1\in H_1$, there exists $h_2\in H_2$ such that $\theta_{12}(h_1)=\theta_{12}(h_2)$ and for all $h_2\in H_2$ there exists $h_1\in H_1$ such that $\theta_{12}(h_2)=\theta_{12}(h_1)$.
\end{dfn}

\begin{dfn}
Consider a set $W=\{ h\in H_i^Z : \theta_{12}(h)\neq I\}$. The generating set $H_i^Z$ is said to be {\em independent on the weld } if for all subsets $Y\subseteq W$ such that $Y\neq \phi$, the product $\prod_{h\in Y} h\neq I$. 
\end{dfn} 

\begin{thm}
\label{weldingtheorem}
If the sets $H_1^Z \cup L_1^Z$ and $H_2^Z\cup L_2^Z$ are well matched and independent on the weld and if the sets $L_1^Z$ and $L_2^Z$ are well matched and independent on the weld then:
\begin{enumerate}
\item A generating set for the $Z$-type stabilizers of the stabilizer code $S_3$ is given by $H_3^Z\equiv \{h_1 h_2 \theta_{12}(h_1): h_1 \in H_1^Z, h_2 \in H_2^Z, \theta_{12}(h_1)=\theta_{12}(h_2) \}$.
\item A generating set for the $Z$-type logical operators  of the stabilizer code $S_3$ is given by $L_3^Z\equiv \{l_1 l_2 \theta_{12}(l_1): l_1 \in L_1^Z, l_2 \in L_2^Z, \theta_{12}(l_1)=\theta_{12}(l_2) \}$.
\end{enumerate}
\end{thm}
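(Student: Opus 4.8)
The plan is to work entirely in the $\mathbb{F}_2$ picture: identify a $Z$-type operator with its support and discard signs, which is harmless here since the earlier remark normalizes stabilizers and logical operators to $+1$ coefficients and the statement concerns only the groups $S_3^Z$, $N_3^Z$ and the logical cosets. Partition the qubits into $R_1=Q_1\setminus Q_2$, $R_2=Q_2\setminus Q_1$ and $R_{12}=Q_1\cap Q_2$, so that every $Z$-type operator on $Q_1\cup Q_2$ factors uniquely into pieces on these three regions and $\theta_1$, $\theta_2$, $\theta_{12}$ just read off the corresponding pieces. The ``$\subseteq$'' halves of both items are then one line: for a welded generator $g=h_1h_2\theta_{12}(h_1)$ with $\theta_{12}(h_1)=\theta_{12}(h_2)$ the three factors contribute $\theta_{12}(h_1)\theta_{12}(h_2)\theta_{12}(h_1)=\theta_{12}(h_1)$ on $R_{12}$, whence $\theta_1(g)=h_1$ and $\theta_2(g)=h_2$; taking $h_1,h_2\in H^Z$ puts $g$ in $S_3^Z$ by definition, and taking $l_1,l_2\in L^Z$ puts $g$ in $N_3^Z$ because $l_i\in N_i^Z$ commutes with $H_i^X$ and, as just computed, $g$ restricted to $Q_i$ equals $l_i$.

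For the ``$\supseteq$'' half of item 1, take $h\in S_3^Z$ and expand $\theta_1(h)=\prod_{h_1\in A}h_1$ over $H_1^Z$. Since $H_1^Z\cup L_1^Z$ is well matched with $H_2^Z\cup L_2^Z$, each $h_1\in A$ has a partner with the same $\theta_{12}$-restriction; multiplying $h$ by the corresponding welded generators removes all of its support on $Q_1$, leaving a residual $g$ supported on $R_2$ with $g=\theta_2(g)\in S_2^Z$ and $\theta_{12}(g)=I$. Expanding $g$ over $H_2^Z$ and invoking independence on the weld, only generators trivial on $R_{12}$ can appear; any such $h_2$ is the $Q_2$-restriction of a welded generator $h_1'h_2\in H_3^Z$ obtained from a weld-trivial partner $h_1'\in H_1^Z$, and two such welded generators sharing the partner $h_1'$ multiply to $h_2h_2'\in\langle H_3^Z\rangle$, which suffices to absorb $g$. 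Hence $h\in\langle H_3^Z\rangle$.

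Item 2 splits into minimality and generation. For minimality, if a product $\prod_i l_1^{(i)}l_2^{(i)}\theta_{12}(l_1^{(i)})$ of $L_3^Z$-generators lies in $S_3^Z$, applying $\theta_1$ gives $\prod_i l_1^{(i)}\in S_1^Z\cap\langle L_1^Z\rangle=\{I\}$, likewise $\prod_i l_2^{(i)}=I$, and then the $R_{12}$-part equals $\theta_{12}(\prod_i l_1^{(i)})=I$, so the product is $I$ and $\langle L_3^Z\rangle\cap S_3^Z=\{I\}$. For generation, take $l\in N_3^Z$; if $l\in S_3^Z$ we are done by item 1, and otherwise the Observations of this appendix give $\theta_1(l)\in N_1^Z$ and $\theta_2(l)\in N_2^Z$, which we write as a product of $H_1^Z$-generators times a product of $L_1^Z$-generators, and similarly on $Q_2$. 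Using well-matchedness of $L_1^Z$ with $L_2^Z$, multiply $l$ by welded logical generators chosen to cancel its $L_1^Z$-part on $Q_1$; the hypotheses — well matched and independent on the weld, for both $L_i^Z$ and $H_i^Z\cup L_i^Z$ — are precisely what forces this to cancel the $L_2^Z$-part on $Q_2$ as well, up to $S_2^Z$. Applying the item-1 correction to the leftover $H^Z$-parts then lands in $S_3^Z$, so $l\in\langle L_3^Z\cup S_3^Z\rangle$.

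The genuinely nontrivial step in all of this is the residual-handling inside the two ``$\supseteq$'' arguments: one must check that the operator remaining after each correction truly lies in the span of the welded generators and not merely agrees with something there on one block, and this is where both hypotheses are needed at full strength — well matched to guarantee the correction partners exist, independent on the weld to prevent weld-nontrivial generators from reappearing in the residual and to keep the parity and logical-class bookkeeping consistent — together with the wrinkle that well-matchedness is assumed only for the unions $H_i^Z\cup L_i^Z$, so the partner of an $H$-generator may be an $L$-generator and one must see that the argument still closes (which is why the theorem separately demands well-matchedness and independence for the $L_i^Z$ alone). If this constructive bookkeeping gets unwieldy, a clean fallback is a dimension count over $\mathbb{F}_2$: identify $S_3^Z$ with $\{(s_1,s_2)\in S_1^Z\times S_2^Z:\theta_{12}(s_1)=\theta_{12}(s_2)\}$, so that $\dim S_3^Z=\dim S_1^Z+\dim S_2^Z-\dim(\theta_{12}S_1^Z+\theta_{12}S_2^Z)$, and compare with the number of independent welded generators, independence on the weld being exactly what makes the two counts agree; the same bookkeeping applied to $N_3^Z$ settles item 2.
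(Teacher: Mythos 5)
Your high-level route is the same as the paper's: use well-matchedness to cancel the $Q_1$-restriction of a given element of $S_3^Z$ (resp.\ $N_3^Z$) by multiplying with welded generators, then use independence on the weld to dispose of the weld-trivial residual supported on $Q_2$. However, the two steps that carry the real weight are not actually established. In item 1, after reaching the residual $g\in S_2^Z$ with $\theta_{12}(g)=I$ and expanding it in weld-trivial generators of $H_2^Z$, your absorption mechanism fails as written: products of two welded generators sharing a weld-trivial partner $h_1'$ only ever produce \emph{pairs} $h_2h_2'$, so a residual containing an odd number of weld-trivial generators (for instance a single $h_2$) is unreachable, and nothing in the hypotheses guarantees that a weld-trivial $h_1'\in H_1^Z$ exists at all. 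The paper closes exactly this point in Lemma~\ref{gen1} by allowing the identity as the partner, so that each weld-trivial element of $H_2^Z\cup L_2^Z$ is itself an element of $H_3^Z\cup L_3^Z$; you declined that move but supplied no working substitute, and ``which suffices to absorb $g$'' is where your item-1 argument breaks.

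In item 2 the crux --- that cancelling the $L_1^Z$-part on $Q_1$ forces cancellation of the $L_2^Z$-part on $Q_2$ up to $S_2^Z$ --- is asserted (``the hypotheses are precisely what forces this''), not argued; the paper never needs such an intermediate claim, because it cancels the entire $\theta_1$-restriction, stabilizer and logical parts together, with an element of $\langle H_3^Z\cup L_3^Z\rangle$ and then applies Lemma~\ref{gen1} to the residual, which handles a stabilizer or logical leftover uniformly. Likewise, the type-mismatch wrinkle you flag (the well-matched partner of some $h_1\in H_1^Z$ might lie only in $L_2^Z$, so the ``corresponding welded generator'' need not belong to $H_3^Z$) is acknowledged but left unresolved, even though your item-1 cancellation silently requires partners in $H_2^Z$; the paper makes the same tacit assumption when it writes $\theta_1(H_3^Z)=H_1^Z$, but flagging the issue without closing it leaves your proof incomplete there too. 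The $\mathbb{F}_2$ dimension-count ``fallback'' is only a sketch: the fibre-product description of $S_3^Z$ is fine, but the matching count of independent welded generators --- the only nontrivial part, and precisely where independence on the weld must enter --- is left undone. The parts you do complete (both ``$\subseteq$'' inclusions and the minimality computation $\langle L_3^Z\rangle\cap S_3^Z=\{I\}$) are correct.
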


Theorem \ref{weldingtheorem} tells us what the stabilizers group is and what the logical operators are. Hence we can find properties of the code such as the number of encoded qubits. For instance, suppose stabilizer groups $S_1$ and $S_2$ have $n_1$ and $n_2$ independent $Z$-type logical operators that have no support on qubits $Q_1\cap Q_2$ and both codes have $m$ $Z$-type logical operators that have support on $Q_1 \cap Q_2$.  The welded code has $n_1 + n_2 +m$ encoded qubits.  

To prove theorem \ref{weldingtheorem}, we first prove a lemma.
\begin{lem}
\label{gen1}
The set $H_3^Z \cup L_3^Z$ generates every element in the groups $\{ h\in N_2^Z : \theta_1(h)=I\}$ and $\{ h\in N_1^Z : \theta_2(h)=I\}$.
\end{lem}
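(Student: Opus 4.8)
The plan is to show, by symmetry it suffices to treat one of the two groups, say $\{h \in N_2^Z : \theta_1(h) = I\}$. Take such an $h$. Since $h$ has no support on $Q_1$, in particular it has no support on $Q_1 \cap Q_2$, so $\theta_{12}(h) = I$. By Observation on $N_i^Z$ logical structure, $\theta_2(h) = h$ lies in $N_2^Z = \langle L_2^Z \cup S_2^Z\rangle$, so we can write $h$ as a product of generators drawn from $L_2^Z \cup H_2^Z$. I would like to upgrade each such generator $g$ appearing in this product to the corresponding welded generator in $L_3^Z \cup H_3^Z$; the welded generator is $g' = g_1 g \theta_{12}(g)$ where $g_1 \in L_1^Z$ (resp. $H_1^Z$) is a partner of $g$ under the \emph{well matched} hypothesis, i.e. $\theta_{12}(g_1) = \theta_{12}(g)$. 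Then $g' = g_1 \cdot g$ differs from $g$ precisely by the factor $g_1$, which is supported on $Q_1$.

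The heart of the argument is then a cancellation step: form the product $P$ of all the welded generators $g'$ corresponding to the chosen factorization of $h$. We have $P = (\prod g_1) \cdot (\prod g) = (\prod g_1) \cdot h$ after regrouping (all operators here are $Z$-type, hence commute). Restricting to $Q_1$: since $\theta_1(h) = I$ we get $\theta_1(P) = \prod \theta_1(g_1) = \prod g_1$ (each $g_1$ being supported on $Q_1$). But $\theta_1(P)$ is a product of $\theta_1$ of welded stabilizers and logical operators of $S_3$, and by Observation on $N_3^Z$, $\theta_1$ of any element of $N_3^Z$ lies in $N_1^Z$; moreover each $\theta_1(g') $ equals $g_1 \theta_1(g)$ which is a legitimate element $g_1 \in L_1^Z \cup H_1^Z$ up to the (trivial, since $\theta_1(g)$ has support only on $Q_1\cap Q_2$ here... ) — this is where I must be careful. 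The clean way: the collection of first-coordinate partners $\{g_1\}$ was chosen with matching $\theta_{12}$-restrictions, so $\theta_{12}(\prod g_1) = \theta_{12}(\prod g) = \theta_{12}(h) = I$. Hence $\prod g_1 \in \langle L_1^Z \cup H_1^Z\rangle$ is a product of the subset $W$-type generators whose weld-restriction multiplies to the identity, and by the \emph{independent on the weld} hypothesis applied to $H_1^Z \cup L_1^Z$, every such product with nontrivial $\theta_{12}$-pieces is nontrivial unless the subset is empty — so the net contribution of the $\theta_{12}\neq I$ generators among the $g_1$ must cancel, forcing $\prod g_1$ to be a product of generators with $\theta_{12}(g_1) = I$, i.e. $\prod g_1$ is supported off $Q_1 \cap Q_2$. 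Combined with $\theta_2(\prod g_1) $ reasoning one concludes $\prod g_1 = I$, whence $P = h$ and $h \in \langle H_3^Z \cup L_3^Z\rangle$.

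I expect the main obstacle to be exactly this last bookkeeping: showing that the extraneous factor $\prod g_1$ picked up by upgrading $S_2$-generators to welded generators actually vanishes. This is not automatic from well-matchedness alone — it genuinely uses \emph{independence on the weld} to rule out accidental cancellations that would leave a nontrivial residual operator on $Q_1$, and it uses that $h$ itself is invisible on $Q_1$. A secondary subtlety is choosing the partners $g_1$ consistently so that the map "generator of $S_2$ $\mapsto$ welded generator" is well-defined on the factorization; since the welded generating sets $H_3^Z, L_3^Z$ are defined by ranging over \emph{all} matching pairs, any choice works, but one should fix it before forming $P$. Once these are in place the symmetric statement for $\{h \in N_1^Z : \theta_2(h) = I\}$ follows by exchanging the roles of $1$ and $2$.
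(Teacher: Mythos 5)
Your overall strategy (reduce to one group by symmetry, factor $h$ over $H_2^Z \cup L_2^Z$, and invoke independence on the weld) is in the right spirit, but the step you yourself flag as the main obstacle is a genuine gap, as written. After upgrading each factor $g$ to a welded generator $g' = g_1\, g\, \theta_{12}(g_1)$, you must kill the residual factor $\prod g_1$. From $\theta_{12}\!\left(\prod g_1\right) = \theta_{12}(h) = I$ and independence on the weld you can conclude at most that the weld-supported partners drop out; what remains is a product of weld-free elements of $H_1^Z \cup L_1^Z$, which in general is a nontrivial operator supported on $Q_1 \setminus Q_2$, not the identity. The sentence ``combined with $\theta_2(\prod g_1)$ reasoning one concludes $\prod g_1 = I$'' is therefore unjustified; worse, showing that such a residual (a weld-free element of $N_1^Z$) lies in $\left< H_3^Z \cup L_3^Z\right>$ is precisely the symmetric half of the lemma you are proving, so patching the gap that way would be circular. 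The fix requires fixing the partner choice to $g_1 = I$ whenever $\theta_{12}(g) = I$, which your ``any choice works'' remark explicitly declines to do. Note also that transferring independence-on-the-weld to the partner set $H_1^Z \cup L_1^Z$ needs the additional fact that distinct weld-supported factors $g$ have distinct partners $g_1$, which itself only follows from independence of $H_2^Z \cup L_2^Z$ — the set to which the hypothesis should be applied directly.

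The paper's proof avoids all of this bookkeeping: since $\theta_1(h)=I$ forces $\theta_{12}(h)=I$, independence on the weld applied directly to $H_2^Z \cup L_2^Z$ shows that no generator with support on $Q_1 \cap Q_2$ can occur in the factorization at all, so every factor $g$ satisfies $\theta_1(g)=I$; each such $g$ is then literally an element of $H_3^Z \cup L_3^Z$, being the weld of $g$ with the identity (the matching condition $\theta_{12}(g)=\theta_{12}(I)=I$ holds trivially), and no residual factor ever appears. Your argument can be repaired by making the canonical choice $g_1 = I$ for weld-free factors and then observing, through the chain of independence arguments above, that the factorization contains no weld-supported generators — but at that point you have reproduced the paper's direct argument with unnecessary detours.
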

\begin{proof}
We will show that if $g\in H_2^Z \cup L_2^Z$ such that $\theta_1(g)=I$, then $g\in H_3^Z \cup L_3^Z$ and that $H_2^Z \cup L_2^Z$ generates all of the group $\{ h\in N_2^Z : \theta_1(h)=I\}$. 

If $g\in H_2^Z$ such that $\theta_1(g)=I$ then we can find $h_1 \in H_1^Z$ and $h_2\in H_2^Z$ such that $\theta_{12}(h_1)=\theta_{12}(h_2)$ and $h_1 h_2 \theta_{12}(h_1)=g$. Namely when $h_1=g$ and $h_2=I$. Hence $g\in H_3^Z$. Similarly if $g\in L_2^Z$ such that $\theta_1(g)=I$ then we can find $l_1 \in L_1^Z$ and $l_2\in L_2^Z$ such that $\theta_{12}(l_1)=\theta_{12}(l_2)$ and $l_1 l_2 \theta_{12}(l_1)=g$. Namely when $l_1=g$ and $l_2=I$. Hence $g\in L_3^Z$.

The set $\{ H_2^Z \cup L_2^Z \}$ necessarily generates every element of the set $\{g \in N_2^Z: \theta_1(g)=I \}$. This is because no element of $\{ H_2^Z \cup L_2^Z \}$ with support on qubits $Q_1\cap Q_2$ can be included in the product by independence on the weld. 

Proving that the set $H_3^Z \cup L_3^Z$ generates every elements of the group  $\{ h\in N_1^Z : \theta_2(h)=I\}$ is symmetric to proving that it generates every element of the group $\{ h\in N_2^Z : \theta_1(h)=I\}$.
\end{proof}

We now prove theorem \ref{weldingtheorem}. 

\begin{proof}
We first prove item 1 that $H_3^Z$ generates $S_3^Z$. We have the identity $\theta_1(H_3^Z)=H_1^Z$ because 
\begin{align*}
&\theta_1(H_3^Z) \\
&=\{ \theta_1(h_1 h_2 \theta_{12}(h_1)):  h_1 \in H_1^Z, h_2 \in H_2^Z, \theta_{12}(h_1)=\theta_{12}(h_2) \} \\
&=H_1^Z.
\end{align*} Hence if $g \in S_3^Z$ then we can find $h\in \left< H_3^Z \right>$ such that $\theta_1(hg)=I$. The operator $hg$ is a stabilizer of $S_3$. By lemma \ref{gen1}, $hg$ must be generated by the set $H_3^Z$. Hence the operator $g$ is generated by the set $H_3^Z$. 

Next we prove item 2 that the set $L_3^Z$ generates all $Z$-type logical operators of the stabilizer group $S_3$. We need to prove that $\left< L_3^Z \cup H_3^Z \right> = N_3^Z$. We know that $\theta_1(\left< L_3^Z \cup H_3^Z \right>)=\theta_1(N_1^Z)$. Hence if $l\in N_3^Z$ then we can find $h\in \left<L_3^Z \cup H_3^Z \right>$ such that $\theta_1(hg)=I$. By lemma \ref{gen1}, $hg \in \left<L_3^Z \cup H_3^Z \right>$ and hence the operator $g$ is generated by $L_3^Z  \cup H_3^Z$. 
\end{proof}

Next we'll discuss locality. We'll show that local well-matched independent-on-the-weld CSS stabilizer groups weld into a local stabilizer group. First we'll need a notion of the width of a stabilizer.

\begin{dfn}
Define the distance between qubits $q_1,q_2 \in Q$ to be $d(q_1,q_2)$. Let the maximum width of an element in a set of Pauli operators $M$ be defined as $R(M)\equiv max \ d(q_1,q_2)$ such that $q_1,q_2 \in Q(\{ h \})$ for $h\in M$. 
\end{dfn}

\begin{cor}\
\label{distance}
If $H_1^Z$ and $H_2^Z$ are well matched and linearly independent on the weld then 
\begin{equation}
R(H_3^Z \cup H_3^X) \leq R(H_1^Z \cup H_1^X)+R(H_2^Z \cup H_2^X).
\end{equation}
\end{cor}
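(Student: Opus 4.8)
\textbf{Proof proposal for Corollary \ref{distance}.}

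The plan is to bound the maximum width of any single generator in $H_3^Z \cup H_3^X$ by tracing through the explicit descriptions of those generators given in Theorem \ref{weldingtheorem} and in the definition of $H_3^X$. The quantity $R(M)$ is a maximum of $d(q_1,q_2)$ over pairs of qubits in the support of a \emph{single} element $h\in M$, so it suffices to fix one generator $h\in H_3^Z \cup H_3^X$ and show that its support has diameter at most $R(H_1^Z\cup H_1^X)+R(H_2^Z\cup H_2^X)$.

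First I would handle $H_3^X = H_1^X \cup H_2^X$: any $h\in H_3^X$ is either in $H_1^X$, so $Q(\{h\})$ has diameter $\le R(H_1^X)\le R(H_1^Z\cup H_1^X)$, or in $H_2^X$, with the symmetric bound; in either case the diameter is at most the sum. Next, the main case: an element $h\in H_3^Z$ has the form $h_1 h_2 \theta_{12}(h_1)$ with $h_1\in H_1^Z$, $h_2\in H_2^Z$, and $\theta_{12}(h_1)=\theta_{12}(h_2)$. Its support satisfies $Q(\{h\})\subseteq Q(\{h_1\})\cup Q(\{h_2\})$ (the factor $\theta_{12}(h_1)$, a $Z$-type operator, can only cancel support on $Q_1\cap Q_2$, never create new support). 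Now take any two qubits $q,q'\in Q(\{h\})$. If both lie in $Q(\{h_1\})$, then $d(q,q')\le R(H_1^Z)\le R(H_1^Z\cup H_1^X)$; likewise if both lie in $Q(\{h_2\})$. The remaining case is $q\in Q(\{h_1\})\setminus Q(\{h_2\})$ and $q'\in Q(\{h_2\})\setminus Q(\{h_1\})$. Here I would use that $h$ acts nontrivially on $Q(\{h\})$, hence in particular $\theta_{12}(h_1)\neq I$ would be the generic situation — but more robustly, well-matchedness of $H_1^Z$ and $H_2^Z$ guarantees that for the generator $h_1$ appearing here there is a matching $h_2$, and the pair is joined precisely through the shared qubits $Q_1\cap Q_2$. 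So pick a qubit $r\in Q(\{h_1\})\cap Q(\{h_2\})$: then $d(q,q')\le d(q,r)+d(r,q')\le R(H_1^Z)+R(H_2^Z)\le R(H_1^Z\cup H_1^X)+R(H_2^Z\cup H_2^X)$ by the triangle inequality.

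The one gap in the argument above is the assumption that a common qubit $r\in Q(\{h_1\})\cap Q(\{h_2\})$ exists whenever $h$ has support in both $Q(\{h_1\})\setminus Q(\{h_2\})$ and $Q(\{h_2\})\setminus Q(\{h_1\})$. I expect this to be the main obstacle, and I would close it using the hypotheses: if $\theta_{12}(h_1)=\theta_{12}(h_2)=I$, then $h_1$ has no support on $Q_1\cap Q_2$, so $h=h_1 h_2$ with $h_1,h_2$ on disjoint-except-outside-the-weld supports — but then actually $Q(\{h_1\})\cap Q_2 \subseteq Q(\{h_1\})\cap Q(\{h_2\}) $ argument needs care; in fact when $\theta_{12}(h_1)=I$ the generator $h_1h_2$ is just the disjoint union and one checks directly that $Q(\{h_1\})$ and $Q(\{h_2\})$ can only meet outside $Q_1\cap Q_2$, forcing them to be genuinely disjoint, so the mixed case cannot occur and $Q(\{h\})$ lies entirely in $Q(\{h_1\})$ or entirely in $Q(\{h_2\})$ only if one factor is trivial; otherwise $h$ is reducible and we should instead bound each of $Q(\{h_1\})$, $Q(\{h_2\})$ separately, observing they are in different connected pieces. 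When $\theta_{12}(h_1)\neq I$, there is a qubit $r\in Q_1\cap Q_2$ in the support of $\theta_{12}(h_1)$, hence in $Q(\{h_1\})\cap Q(\{h_2\})$, and the triangle inequality step goes through. Assembling these cases gives $R(H_3^Z)\le R(H_1^Z\cup H_1^X)+R(H_2^Z\cup H_2^X)$, and combined with the $H_3^X$ bound this yields the claimed inequality for $R(H_3^Z\cup H_3^X)$.
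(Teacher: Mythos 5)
Your proposal follows essentially the same route as the paper: bound the width of each generator separately, dispose of $H_3^X=H_1^X\cup H_2^X$ trivially, and for a welded generator $h_1h_2\theta_{12}(h_1)\in H_3^Z$ bound its diameter by $R(\{h_1\})+R(\{h_2\})$ via the triangle inequality through a qubit of the weld. The paper's proof is exactly this, stated in one line ("by the triangle inequality"), so the issue you isolate --- that the triangle step needs a common qubit $r\in Q(\{h_1\})\cap Q(\{h_2\})$, which is only guaranteed when $\theta_{12}(h_1)=\theta_{12}(h_2)\neq I$ --- is a point the paper silently glosses over, and your treatment of the case $\theta_{12}(h_1)\neq I$ is complete and correct. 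Your handling of the degenerate case is the only place to tighten: the assertion that "the mixed case cannot occur" is not right as written, since for $\theta_{12}(h_1)=\theta_{12}(h_2)=I$ with both factors nontrivial, the element $h_1h_2$ literally belongs to the set $H_3^Z$ as defined in Theorem \ref{weldingtheorem}, has support in both pieces, and its width can exceed the claimed bound (e.g. two plaquettes at opposite ends of a long welded structure). The correct resolution is the one you gesture at in your last clause: such composite elements are redundant, because $h_1\cdot I$ and $I\cdot h_2$ are themselves members of $H_3^Z$ (this is precisely the choice $h_2=I$ made in the proof of Lemma \ref{gen1}), so one may prune them from the generating set without changing $\left< H_3^Z\right>$, and the corollary should be read as a statement about this pruned (equivalent) generating set. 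With that reading made explicit, your argument is complete and, if anything, more careful than the paper's.
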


\begin{proof}
Let $h$ be an operator in $H_3^X \cup H_3^Z$ such that $R(\{h\})=R(H_3^X \cup H_3^Z)$.  If $h\in H_1^X$ then $R(H_3^Z\cup H_3^X)=R(\{ h \}) \leq R(H_1^Z\cup H_1^X) + R(H_2^Z\cup H_2^X)=R(\{h\})+R(H_2^Z\cup H_2^X)$. If $h\in H_2^X$ then the proof is the same as for $h\in H_1^X$ so without lack of generality we only prove one of those cases. If $h\in H_3^Z$ then $h=h_1 h_2 \theta_{12}(h_1)$ such that $h_1\in H_1^Z$, $h_2\in H_2^Z$ and $\theta_{12}(h_1)=\theta_{12}(h_2)$. By the triangle inequality $R(\{h_1 h_2 \theta_{12}(h_1) \})\leq R(\{ h_1 \}) + R(\{ h_2 \}) \leq R(H_1^Z\cup H_1^X) + R(H_2^Z\cup H_2^X)$.
\end{proof}

Corollary \ref{distance}, along with theorem \ref{weldingtheorem}, allow us to combine the shapes of the logical operators from two codes while keeping the generating set for the resulting code local. One can even design local codes with large energy barriers using this technique as demonstrated with the welded solid code. 

To summarize, combining two CSS stabilizer codes so that the stabilizers commute lead to the idea that the $Z$-type stabilizer should be updated so as to achieve this. When the $Z$-type stabilizers are updated, they become welded versions of the $Z$-type stabilizers from the former codes.  If the codes are well matched and independent on the weld, then we can weld generating sets of stabilizers and logical operators.  If the former generating sets are local then so will be the welded generating set.

\end{document}